\definecolor{blue}{rgb}{0.1,0.2,0.5}
\definecolor{brown}{rgb}{0.6,0.6,0.2}
\theoremstyle{plain}
\newtheorem{theorem}{Theorem}
\newcommand{\newtheoremwithcrefformat}[2]{%
  \newtheorem{#1}[theorem]{#2}%
  \crefformat{#1}{##2\MakeUppercase#1~##1##3}%
  \Crefformat{#1}{##2\MakeUppercase#1~##1##3}%
}
\newcommand{\newseptheoremwithcrefformat}[2]{%
  \newtheorem{#1}{#2}%
  \crefformat{#1}{##2\MakeUppercase#1~##1##3}%
  \Crefformat{#1}{##2\MakeUppercase#1~##1##3}%
}
\crefname{theorem}{Theorem}{Theorems}
\theoremstyle{nonumberplain}
\newtheorem{proof}{Proof}
\def\cqedsymbol{\ifmmode$\lrcorner$\else{\unskip\nobreak\hfil
\penalty50\hskip1em\null\nobreak\hfil$\lrcorner$
\parfillskip=0pt\finalhyphendemerits=0\endgraf}\fi}
\tikzset{
    position/.style args={#1:#2 from #3}{
        at=(#3.#1), anchor=#1+180, shift=(#1:#2)
    }
}
\newcommand{\Oh}{\mathcal{O}}
\let\originalleft\left
\let\originalright\right
\renewcommand{\left}{\mathopen{}\mathclose\bgroup\originalleft}
\renewcommand{\right}{\aftergroup\egroup\originalright}
\renewcommand{\leq}{\leqslant}
\renewcommand{\geq}{\geqslant}
\definecolor{lars}{HTML}{0F3CCB}
\definecolor{larsA}{HTML}{196F3D}
\definecolor{larsB}{HTML}{FF8000}
\definecolor{larsC}{HTML}{DF0101}
\newcommand{\frontpageformat}{arxiv}
\begin{document}

\ifthenelse{\equal{\frontpageformat}{submission}}{%
\author{anonymous}
\title{A note on independent sets in~sparse-dense~graphs}
\begin{titlepage}
\def\thepage{}
\thispagestyle{empty}
\maketitle
}{%
\author[1,2]{Ueverton S.\ Souza}

\affil[1]{Fluminense Federal University, Brazil}
\affil[2]{University of Warsaw, Poland}

\title{A note on independent sets in~sparse-dense~graphs%
\thanks{
This research has received funding from Rio de Janeiro Research Support Foundation (FAPERJ) under grant agreement E-26/201.344/2021,  National Council for Scientific and Technological Development (CNPq) under grant agreement 309832/2020-9, and the European Research Council (ERC) under the European Union's Horizon 2020 research and innovation programme under grant agreement CUTACOMBS (No. 714704).
}}

\begin{titlepage}
\def\thepage{}
\thispagestyle{empty}
\maketitle
\begin{textblock}{20}(0, 13.3)
\includegraphics[width=40px]{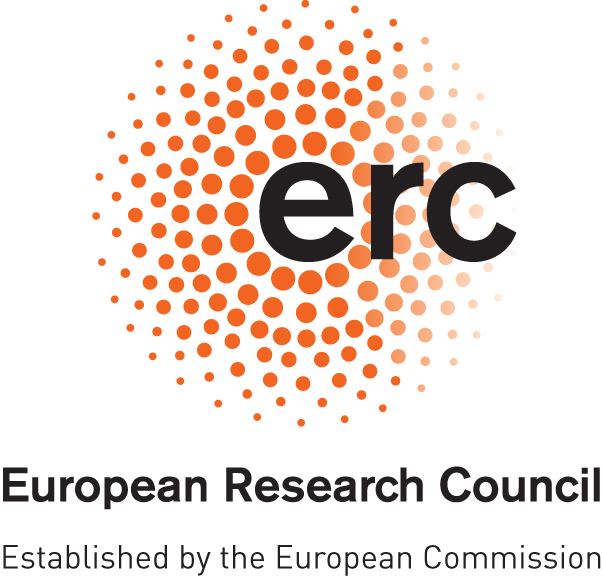}%
\end{textblock}
\begin{textblock}{20}(0, 14.1)
\includegraphics[width=40px]{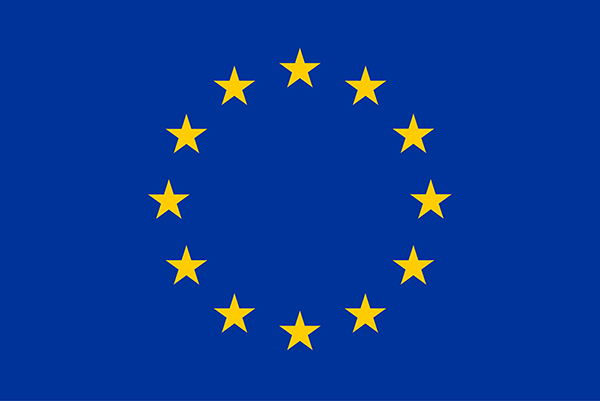}%
\end{textblock}
}

\begin{abstract}
The notion of sparse-dense partitions of graphs was introduced by Feder, Hell, Klein, and Motwani [STOC 1999, SIDMA 2003] as a tool to solve several partitioning problems. A sparse-dense partition of a graph $G$ is a partition of $V(G)$ into two parts $V(G) = S \cup D$ where $S \in \mathcal{S}$ and $D \in \mathcal{D}$ for some graph classes $\mathcal S$ and $\mathcal D$ such that: both $\mathcal S$ and $\mathcal D$ are closed under induced subgraphs; and there is a constant $c$ so that for any $S' \in \mathcal{S}$ and $D' \in \mathcal{D}$ the intersection $S' \cap D'$ has at most $c$ vertices. Graphs admitting sparse-dense partitions for some classes $\mathcal S$ and $\mathcal D$ are called sparse-dense graphs. 
This paper presents some results concerning maximal independent sets in sparse-dense graphs. It is proved~that if a graph $G$ admits a sparse-dense partition concerning classes $\mathcal S$ and $\mathcal D$, where $\mathcal D$ is a subclass of the complement of $K_t$-free graphs (for some ~$t$), and the graph class $\mathcal S$ can be recognized in polynomial time, then:
\begin{itemize}
    \item enumerating all maximal independent sets (or finding a maximum independent set) of $G$ can be performed in polynomial time whenever it can be done in polynomial time for graphs in the class $\mathcal S$.
\end{itemize}
This result has the following interesting implications:
\begin{enumerate}
    \item A \Poly~versus \np-hard dichotomy for {\sc Max.~Independent Set} on $(k, \ell)$-graphs concerning the values of $k$ and $\ell$. The class of $(k, \ell)$-graphs is the class of graphs $G$ where $V(G)$ can be partitioned into $k$ independent sets and $\ell$ cliques.
    \item 
    A \Poly-time algorithm that does not require $(1,\ell)$-partitions for determining whether a $(1,\ell)$-graph $G$ is well-covered.  Well-covered graphs are graphs in which every maximal independent set has the same cardinality.
    \item The characterization of conflict graph classes for which the conflict version of a {\Poly-time} graph problem is still in {\Poly} assuming such classes. Conflict versions of graph problems ask for solutions avoiding pairs of conflicting elements (vertices/edges) described in conflict graphs. 
\end{enumerate}    
\end{abstract}

\end{titlepage}

%
%
%
%
%

\section{Introduction}

Partitioning graph problems consist in separating the vertices of a given graph $G$ into $k$ parts $V_1,\ldots, V_k$ such that $V_1 \cup \ldots \cup V_k = V(G)$, $V_i \cap V_j = \emptyset$ ($1 \leq i,j \leq k$ and $i \neq j$), and each part $V_i$ satisfies a property $\Pi_i$.  Sometimes it can be also required that the edge set between $V_i$ and $V_j$ satisfies a property $\Pi_{i,j}$ for $1 \leq i,j \leq k$ and $i \neq j$.  Usually, some of these internal ($\Pi_i$) or external ($\Pi_{i,j}$) properties are irrelevant, allowing to be any set of vertices or edges (including the empty set). Classical partitioning graph problems are {\sc $k$-Coloring} and {\sc $k$-Clique Cover} which aim partition the vertices of a given graph $G$ into $k$ independent sets or $k$ cliques, respectively.   Also,  Feder, Hell, Klein, and Motwani~\cite{FederHKM03,feder1999complexity} described a family of partitioning problems called $M$-partition problems where the ``pattern of requirements'' is given by a fixed symmetric $k$-by-$k$ matrix.
The recognition of \textit{split graphs} is another example of a partitioning problem. A graph $G$ is split if it admits a partition of its vertex set into a clique and an independent set. 
A generalization of split graphs defined by Brandst{\"a}dt~\cite{brandstadt1996partitions} is the  notion of $(k, \ell)$\textit{-graphs}, that is graphs whose vertex set admits a partition into $k$ independent sets and $\ell$ cliques.  Such partitions are called $(k, \ell)$-\emph{partitions}. For convenience, some parts are allowed to be empty, and a single vertex set can be counted as either an independent set or a clique. A {\Poly} versus \np-complete dichotomy for recognizing $(k, \ell)$-graphs was proved by Brandst{\"a}dt~\cite{brandstadt1996partitions}: the problem is in  {\Poly} if $\max\{k, \ell\} \leq 2$, and is $\NP$-complete otherwise.
Any $(k, \ell)$-partition can be seen as a partition of $V(G)$ into a ``sparse'' graph with chromatic number at most $k$ and a ``dense'' graph with clique cover number at most $\ell$. This notion of partitioning graphs into ``sparse'' and ``dense'' parts was generalized by Feder, Hell, Klein, and Motwani~\cite{FederHKM03,feder1999complexity} as a tool to solve several partitioning problems.

Let $\mathcal S$ and $\mathcal D$ be two classes of graphs. Feder, Hell, Klein, and Motwani~\cite{FederHKM03,feder1999complexity} described $\mathcal S$ as a class of sparse graphs and $\mathcal D$ as a class of dense graphs if $\mathcal S$ and $\mathcal D$ satisfy the following constraints:
\begin{itemize}
    \item both $\mathcal S$ and $\mathcal D$ are closed under induced subgraphs;
    \item there exists a constant $c$ such that the intersection ${S} \cap {D}$ has at most $c$ vertices for any $S \in \mathcal{S}$ and $D \in \mathcal{D}$.
\end{itemize}

A \emph{sparse-dense partition} of a graph $G$, with respect to the classes $\mathcal S$ and $\mathcal D$ of ``sparse'' and ``dense'' graphs, is a partition of $V(G)$ into two parts $V(G) = S \cup D$ such that $S \in \mathcal{S}$ ($S$ is sparse) and $D \in \mathcal{D}$ ($D$ is dense). Graphs admitting sparse-dense partitions for some classes $\mathcal S$ and $\mathcal D$ are called \emph{sparse-dense graphs}. 
Some examples of sparse-dense partitions are partitions into a graph having bounded clique size and a graph having bounded independence number; partitions into a planar graph and a clique; or partitions into a bipartite graph and a co-bipartite graph. For partitioning problems modeled by sparse-dense partitions, one can apply the following theorem due to Feder, Hell, Klein, and Motwani:

\begin{theorem}[Sparse-Dense Theorem~\cite{FederHKM03,feder1999complexity}]
Let $\mathcal S$ and $\mathcal D$ be classes of sparse and dense graphs, respectively. A graph on $n$ vertices has at most $n^{2c}$ different sparse-dense partitions. Furthermore, all these partitions can be found in time proportional to $n^{2c+2}\cdot T(n)$, where $T(n)$ is the time for recognizing the sparse and dense graphs.
\end{theorem}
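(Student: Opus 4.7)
The plan is to isolate a structural lemma controlling the symmetric difference of any two sparse-dense partitions of $G$, and then to derive both the counting bound and a brute-force enumeration algorithm from it.

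I would first prove the following lemma: for any two sparse-dense partitions $(S_1, D_1)$ and $(S_2, D_2)$ of $G$, we have $|S_1 \triangle S_2| \leq 2c$. The proof is direct. Writing
\[ S_1 \triangle S_2 = (S_1 \cap D_2) \cup (S_2 \cap D_1), \]
each intersection on the right-hand side induces a graph that lies in $\mathcal{S}$ (as an induced subgraph of the corresponding $S_i \in \mathcal{S}$) and simultaneously in $\mathcal{D}$ (as an induced subgraph of the corresponding $D_j \in \mathcal{D}$), by the closure of $\mathcal S$ and $\mathcal D$ under induced subgraphs. The bounded-intersection hypothesis then forces each intersection to have at most $c$ vertices.

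The counting bound is an immediate corollary. If $G$ admits no sparse-dense partition the bound holds trivially; otherwise fix any single sparse-dense partition $(S_0, D_0)$ as a reference, and note that every other partition $(S, D)$ is uniquely encoded by the set $S \triangle S_0 \subseteq V(G)$, which by the lemma has size at most $2c$. The number of such subsets is $\sum_{i=0}^{2c} \binom{n}{i} \leq n^{2c}$ for $n$ sufficiently large in terms of $c$.

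For the enumeration, the plan is to iterate over a polynomial-sized family of candidate partitions and verify each via the recognition oracles. Given any reference partition $(S_0, D_0)$, the structural lemma says every sparse-dense partition $(S, D)$ is obtained by flipping the membership of at most $2c$ vertices; so the perturbation step enumerates all $O(n^{2c})$ subsets $X \subseteq V(G)$ with $|X| \leq 2c$ and tests whether $(S_0 \triangle X, D_0 \triangle X)$ is sparse-dense in time $T(n)$ per test. To find an initial reference, I would enumerate, for each ordered pair of vertices $(u, v) \in V(G)^2$ (contributing the extra $n^2$ factor), candidate partitions in which $u$ is tentatively placed on the sparse side, $v$ on the dense side, and the remaining vertices are completed by some canonical rule, then verified; any valid sparse-dense partition compatible with such a seed differs from the canonical extension by a set of at most $2c$ vertices, which is in turn enumerated by the subsequent $X$-sweep. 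The overall cost is $O(n^{2c+2} \cdot T(n))$.

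The main obstacle is making the ``canonical completion'' step precise enough to guarantee that at least one iteration produces a valid reference partition whenever any sparse-dense partition exists. The resolution is that completeness, not correctness, is the crux: invalid candidates are discarded by the verification step, while the enumeration is exhaustive enough (covering all $n^2$ seed pairs and all $\leq 2c$-size symmetric-difference edits) that any valid sparse-dense partition must appear as a candidate at some point in the search.
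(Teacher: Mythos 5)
Your structural lemma ($|S_1\triangle S_2|\le 2c$ for any two sparse-dense partitions, via heredity plus the bounded-intersection hypothesis) and the counting bound derived from it are correct, and they are exactly the combinatorial core of the argument in Feder--Hell--Klein--Motwani (the paper itself only cites this theorem and gives no proof). The problem is the algorithmic half. Your enumeration is anchored on a ``reference'' partition, but the lemma only bounds the distance between two \emph{valid} sparse-dense partitions; it says nothing about how far an arbitrary ``canonical completion'' of a seed pair $(u,v)$ is from any valid partition. A seed carries only the placement of two vertices, and the unspecified canonical rule may produce a candidate whose symmetric difference with every valid partition is much larger than $2c$, in which case the subsequent $\le 2c$-edit sweep misses everything. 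The sentence ``any valid sparse-dense partition compatible with such a seed differs from the canonical extension by a set of at most $2c$ vertices'' is precisely the unproved step, and your closing remark that ``completeness, not correctness, is the crux'' concedes the gap rather than closing it: verification filters out bad candidates, but nothing guarantees a good candidate is ever generated.

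The standard way to close this gap (and the route taken in the cited source) avoids the need for an initial reference partition altogether: order the vertices $v_1,\dots,v_n$ and maintain, for $i=1,\dots,n$, the complete list of sparse-dense partitions of $G[\{v_1,\dots,v_i\}]$. Because $\mathcal S$ and $\mathcal D$ are closed under induced subgraphs, the restriction of any sparse-dense partition of $G[\{v_1,\dots,v_{i+1}\}]$ to the first $i$ vertices is again a sparse-dense partition, so every partition at level $i+1$ arises by adding $v_{i+1}$ to one side of a partition already in the level-$i$ list; each of the at most $i^{2c}\le n^{2c}$ stored partitions yields two candidates, each verified in $\Oh(T(n))$ time by the recognition algorithms. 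This gives completeness by induction and the stated $n^{2c+2}\cdot T(n)$ running time, with your counting lemma (applied to each induced subgraph) bounding the list size at every stage.
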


The Sparse-Dense Theorem implies that partitioning problems modeled by sparse-dense partitions can be solved in polynomial time whenever both classes $\mathcal S$ and $\mathcal D$ are polynomial-time recognizable. 

This paper concerns the \textsc{Maximum Independent Set} problem on sparse-dense graphs. It is proved~that if a graph $G$ admits a sparse-dense partition concerning classes $\mathcal S$ and $\mathcal D$ where graphs in $\mathcal S$ can be recognized in polynomial time, and $\mathcal D$ is a subclass of the complement of $K_t$-free graphs (for some nonnegative integer $t$) then:

\begin{itemize}
    \item enumerating all maximal independent sets (or finding a maximum independent set) of $G$ can be performed in polynomial time whenever it can be done in polynomial time for graphs in $\mathcal S$.
\end{itemize}

Some examples of sparse classes $\mathcal S$ satisfying the desired constraints are empty graphs (independent sets), trees, and complete $k$-partite graphs for fixed~$k$. Also, by considering finding a maximum independent set instead of enumerating maximal independent sets, one can handle more general graph classes $\mathcal S$ such as bipartite and bounded treewidth graphs. In addition, examples of dense classes $\mathcal D$ are complete graphs, co-bipartite graphs, co-planar graphs, and graphs having bounded co-degeneracy~\cite{duarteMFCS}. 

As a by-product of the main result, 
contributions in the regime of $(k,\ell)$-graphs, well-covered graphs, and conflict-free graph problems are also achieved. These results reveal an applicable structural property for determining tractable instances of problems related to independent sets.

\subsection*{On $(k,\ell)$-graphs}

The problem of recognizing subclasses of $(k,\ell)$-graphs and the complexity of problems on $(k,\ell)$-graphs have been extensively studied~\cite{ITOR_coloracao,baste2017parameterized,DBLP:journals/disopt/DemangeEW05,rlmaxcut,FederHKM03,FederHKNP05,KolayPRS15}.  Sometimes $(k,\ell)$-graphs are also called $(r,\ell)$-graphs due to the common use of $k$ to denote parameters in parameterized complexity.

From the results of this paper, a {\Poly} vs. \-NP-hard dichotomy of \textsc{Maximum Independent Set} regarding to the parameters $k$ and $\ell$ of $(k, \ell)$-graphs is achieved. This determines the boundaries of the {\np}-hardness for such graph classes. The main results of this paper regarding $(k, \ell)$-graphs are:
\begin{enumerate}
    \item the problem of enumerating all maximal independent sets of a graph can be done in polynomial time on the class of $(1,\ell)$-graphs for any $\ell\geq 0$;
    \item the problem of finding a maximum independent set of a graph can be done in polynomial time on the class of $(2,\ell)$-graphs for any $\ell\geq 0$.  
\end{enumerate}

Also, by replacing each edge $e=uv$ of a graph with a path $ux_ey_ev$ where $x_e,y_e$ are new vertices, the problem of finding a maximum independent set is reducible to the case where the input graph is a $(3,0)$-graph. 
Therefore, the complete classification regarding the time complexity
of {\sc Maximum Independent Set} on $(k,\ell)$-graphs is established (see Table 1).


\begin{table}[h!tb]
\centering
\begin{tabular}{|c|c|c|}
\hline 
\diagbox{$k$}{$\ell$} & 0 & $\geq 1$ \\ \hline
$0$        &
--    &
\textcolor{teal}{\Poly} \\ \hline

$1$        & 
\textcolor{teal}{\Poly}  & 
\begin{tabular}{c}\textcolor{teal}{\Poly} \\ (Cor.~\ref{teo:1lcase})\end{tabular} \\ \hline

$2$        & 
\textcolor{teal}{\Poly} & 
\begin{tabular}{c}\textcolor{teal}{\Poly} \\ (Cor.~\ref{teo:2lcase})\end{tabular} \\ \hline

$\geq 3$   & 
\begin{tabular}{c}\textcolor{red}{\np-hard} \\ (Claim.~\ref{claim:30})\end{tabular} & 
\begin{tabular}{c}\textcolor{red}{\np-hard} \\ (Claim.~\ref{claim:30})\end{tabular} \\ \hline
\end{tabular}
\caption{{\Poly} vs. \-NP-hard dichotomy of \textsc{Maximum Independent Set} on $(k,\ell)$-graphs}
\label{dicotomia}
\end{table}

\subsection*{On well-covered graphs}

The \emph{Well-covered graph class} is the class of graphs in which every maximal independent set has the same cardinality~\cite{PLUMMER197091}, and recognizing such a class is {\sf coNP}-complete~\cite{chvatal1993note,sankaranarayana1992complexity}. 
Studies regarding structural characterizations of particular subclasses of well-covered graphs has been considered for a long time~\cite{fcf48a9c6bc04e0f8917664daa376300,doi:10.1002/jgt.3190180707,FINBOW199344,DBLP:journals/gc/KleinMM13,doi:10.1002/(SICI)1097-0118(199602)21:2<113::AID-JGT1>3.0.CO;2-U,ravindra1977well,TANKUS1996293}.
In addition,  parameterized algorithms for recognizing well-covered graphs were presented in~\cite{AlvesDFKSS18,DBLP:journals/dmtcs/AraujoCKSS19}. 

Let $k, \ell \geq 0$ be two fixed integers, not simultaneously 
zero. A graph is {\it $(k, \ell)$-well-covered} if it is both $(k, \ell)$ and well-covered. Recently, Alves, Couto, Faria, Gravier, Klein, and Souza~\cite{AlvesFKGSS20} studied the complexity of {\sc {Graph Sandwich}} for the property of being $(k,\ell)$-well-covered, and Faria and Souza~\cite{cocoon21} studied the complexity of {\sc {Probe}} problems for the property of being $(k,\ell)$-well-covered.
In addition, Alves, Dabrowski, Faria, Klein, Sau, and Souza~\cite{AlvesDFKSS18} established the complete classification of the complexity of recognizing $(k,\ell)$-well-covered graphs.
In particular, they show that recognizing if a graph $G$ is $(1,\ell)$-well-covered is \np-complete for each $\ell\geq 3$, but given an $(1,\ell)$-partition of $G$ to determine whether $G$ is well-covered can be done in polynomial time. Since the problem of finding a $(1,\ell)$-partition is \np-hard for $\ell\geq 3$~\cite{brandstadt1996partitions}, given a graph $G$ that is known to be $(1,\ell)$, to determine whether $G$ is well-covered is challenging when no $(1,\ell)$-partition is given together with $G$.
%
However, since $(0,\ell)$-graphs are $\overline{K_{\ell+1}}$-free, from the results of this paper, it holds that determining whether a given $(1,\ell)$-graph $G$ is well-covered can also be done in polynomial time even when no $(1,\ell)$-partition is provided in the input.

\subsection*{On conflict-free graph problems}

Conflict variants of classical computational problems ask to avoid conflicting elements in problem solutions.  This constraint is natural in real-world applications.  A conflict version of a problem $\mathcal{X}$ is obtained by attaching conflict graphs $\hat{G}$ together with the instances $I_\mathcal{X}$ of $\mathcal{X}$.  In such a  conflict graph, the vertices represent elements of $I_\mathcal{X}$, and its edges represent pairs of elements that are forbidden to be mutually in the same solution. A solution for an instance $(I_\mathcal{X}, \hat{G})$ of a conflict version of $\mathcal{X}$ represents an independent set in $\hat{G}$ and a solution for the instance $I_\mathcal{X}$~of~$\mathcal{X}$.
Conflict versions of {\sc Bin Packing}, {\sc Knapsack},  {\sc Maximum Matching},  {\sc Shortest Path} and {\sc Spanning Tree} have been considered in the literature~\cite{capua2018study,CALDAM2022,darmann2011,gendreau2004heuristics,pferschy2009knapsack,zhang}.

In~\cite{darmann2011}, Darmann, Pferschy, Schauer, and Woeginger proved that the conflict version of {\sc Maximum Matching} remains \np-hard even when the conflict graph is an induced matching (i.e., disjoint union of $K_2$'s). This exemplifies the difficulty of finding nice properties on conflict graphs so that problems originally solvable in polynomial time are still tractable in their version having conflict graphs satisfying these particular properties. Contrastingly, as a corollary of the results of this paper, it follows that if $\mathcal{X}$ is a graph problem in {\Poly} then $(I_\mathcal{X}, \hat{G})$ can be solved in polynomial time whenever $\hat{G}$ is a sparse-dense graph with respect to classes $\mathcal S$ and $\mathcal D$, where $\mathcal D$ is a subclass of $\overline{K_t}$-free graphs (for some nonnegative integer $t$) and $\mathcal S$ is a polynomial-time recognizable graph class such that its maximal independent sets can be enumerated in polynomial time. Examples of such conflict graphs $\hat{G}$ include $(1,\ell)$-graphs.

\section{Results}


\begin{theorem}\label{teo:enumerate}
Let $t$ be a nonnegative integer, and $G$ be a sparse-dense graph concerning $\mathcal S$ and $\mathcal D$ such that $\mathcal D$ is a subclass of $\overline{K_t}$-free graphs and $\mathcal S$ is a graph class recognizable in $T_1(n)$ time whose maximal independent sets can be enumerated in $T_2(n)$ time on $n$-vertex graphs. All maximal independent sets of $G$ can be enumerated in $(T_1(n)+T_2(n))\cdot n^{\Oh(t)}$ time.
\end{theorem}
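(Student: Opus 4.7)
The plan is to exploit the structural fact that every $D \in \mathcal{D}$ is $\overline{K_t}$-free, so $\alpha(G[D]) \leq t-1$. Consequently, for any sparse-dense partition $V(G) = S \cup D$ and any maximal independent set $I$ of $G$, the intersection $I \cap D$ is an independent set of size at most $t-1$, and there are only $n^{\Oh(t)}$ candidates for it. First I would invoke the Sparse-Dense Theorem to obtain one partition $(S,D)$; since $\overline{K_t}$-freeness is testable in $n^{\Oh(t)}$ time by brute force, this preprocessing fits within $n^{\Oh(1)} \cdot (T_1(n) + n^{\Oh(t)})$ time. The enumeration then iterates over every $I_D \subseteq D$ with $|I_D| \leq t-1$ that is independent in $G$; for each such $I_D$ we set $S' \df S \setminus N(I_D)$, observe that $G[S'] \in \mathcal{S}$ by heredity, run the $\mathcal{S}$-enumeration on $G[S']$ in time $T_2(n)$, and for every maximal independent set $I_S$ of $G[S']$ we output $I_D \cup I_S$ after an $\Oh(n^2)$ check confirming it is maximal in $G$.

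For correctness, the key decomposition claim is: if $I$ is a maximal independent set of $G$, then $I_S \df I \cap S$ is a maximal independent set of $G[S']$ where $I_D \df I \cap D$ and $S' = S \setminus N(I_D)$. The inclusion $I_S \subseteq S'$ is immediate because $I$ is independent and contains $I_D$. For maximality, any $v \in S' \setminus I_S$ with no neighbor in $I_S$ would also have no neighbor in $I_D$ (since $v \notin N(I_D)$), hence no neighbor in $I = I_D \cup I_S$, contradicting the maximality of $I$. Thus every maximal independent set of $G$ is produced and, being keyed by the unique $I_D = I \cap D$, is produced exactly once. Multiplying the outer $n^{\Oh(t)}$ iterations, the inner $T_2(n)$ enumeration, and the $\Oh(n^2)$ verification, and then adding the one-shot partition-finding cost, yields the claimed $(T_1(n)+T_2(n)) \cdot n^{\Oh(t)}$ bound.

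The step I expect to require the most care is the asymmetry of maximality. On one hand, we cannot restrict $I_D$ to be a maximal independent set of $G[D]$, because $I \cap D$ need not be maximal in $G[D]$ for a maximal $I$ of $G$; we must therefore iterate over all small independent subsets, which is why the $n^{\Oh(t)}$ factor appears rather than a single enumeration of maximal dense independent sets. On the other hand, a candidate $I_D \cup I_S$ produced by the algorithm may still fail to be maximal in $G$, because some vertex of $D \setminus N[I_D]$ might be neither in nor adjacent to $I_D \cup I_S$, so the final $\Oh(n^2)$ maximality check cannot be omitted. A secondary technicality is that the Sparse-Dense Theorem formally requires a recognizer for $\mathcal{D}$; this is circumvented by working with the superclass of $\overline{K_t}$-free graphs, since only the independence-number bound on the dense side is used in the rest of the argument.
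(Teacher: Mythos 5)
Your proposal is correct and follows essentially the same route as the paper: find one partition via the Sparse-Dense Theorem (using brute-force recognition of $\overline{K_t}$-free graphs on the dense side), iterate over the $n^{\Oh(t)}$ independent subsets of $D$, combine each with maximal independent sets from the sparse side, and filter by a final maximality check in $G$. The only (immaterial) difference is that the paper enumerates the maximal independent sets of $G[S]$ once and extends each $R_D$ by $R_S\setminus N(R_D)$, whereas you re-run the enumeration on $G[S\setminus N(I_D)]$ for each choice of $I_D$; both variants rely on the same decomposition lemma and fit the claimed $(T_1(n)+T_2(n))\cdot n^{\Oh(t)}$ bound, and your explicit proof that $I\cap S$ is maximal in $G[S\setminus N(I\cap D)]$ makes the correctness argument, if anything, more precise than the paper's.
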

\begin{proof}
Recognizing $\overline{K_{t}}$-free graphs can be done in $n^{\Oh(t)}$ time. 
Therefore, by Sparse-Dense Theorem, one can obtain a sparse-dense partition $V(G)=S\cup D$ such that $S\in \mathcal S$ and $D$ is $\overline{K_t}$-free in time $T_1(n)\cdot n^{\Oh(1)}+n^{\Oh(t)}$.
%
Now, one can enumerate all independent sets $R_D$ of $D$ (including the empty~set). These sets contain at most $t$ vertices and, therefore, can be enumerated in $n^{\Oh(t)}$ time. Also, one can enumerate each maximal independent set $R_S$ of $S$ in $T_2(n)$ time. 
After that, for each pair $R_D$, $R_S$, one could obtain the maximal extension of $R_D$ using vertices of each $R_S$ in polynomial time.   
Finally, the resulting independent sets that are maximal form precisely the collection of all maximal independent sets of $G$, since all possible independent sets of $D$ were checked. 
\end{proof}




Remark that the challenge in order to enumerate independent sets in $(1,\ell)$-graphs is that finding a $(1,\ell)$-partition is \np-hard for $\ell\geq 3$. A $(1,\ell)$-partition of a graph is a sparse-dense partition into a (sparse) independent set $S$ and a (dense) $(0,\ell)$-graph $D$. Unfortunately, the Sparse-Dense Theorem cannot be directly applied to find a $(1,\ell)$-partition as $T(n)$, in this case, would be non-polynomial unless \Poly=\np. 
However, every $(1,\ell)$-partition of a graph $G$ can be seen as a partition of $V(G)$ into an independent set $S$ and a $\overline{K_{\ell+1}}$-free graph $D$ (any $(0,\ell)$-graph is $\overline{K_{\ell+1}}$-free).  Similar behavior occurs with several other partitioning problems whose recognition of the class of graphs that admit such partitions is \np-complete. Therefore, the following holds.

\begin{corollary}\label{cor:classedegrafos}
Let $\mathcal{C_{S,D}}$ be the class of sparse-dense graph concerning $\mathcal S$ and $\mathcal D$ such that $\mathcal D$ is a subclass of $\overline{K_t}$-free graphs and $\mathcal S$ is a graph class recognizable in polynomial time whose maximal independent sets can be enumerated in polynomial time.  There is a polynomial-time algorithm that given a graph $G$ either asserts that $G\notin \mathcal{C_{S,D}}$ or enumerates all maximal independent sets of $G$ in polynomial time.
\end{corollary}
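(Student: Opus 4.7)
The plan is to adapt the algorithm implicit in the proof of \Cref{teo:enumerate} so that it works even when no sparse-dense partition of $G$ is promised up front. The key observation is that this procedure never actually uses the fact that the dense side of the partition belongs to $\mathcal{D}$; it only uses that the dense side is $\overline{K_t}$-free. Since $\overline{K_t}$-freeness can be tested in $n^{\Oh(t)}$ time, we will run the Sparse-Dense Theorem on the pair $(\mathcal{S},\overline{K_t}\text{-free graphs})$ in place of $(\mathcal{S},\mathcal{D})$, both of whose components are now polynomial-time recognizable.

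The algorithm then has two cases. First, invoke the Sparse-Dense Theorem to enumerate every partition $V(G)=S'\cup D'$ with $S'\in\mathcal{S}$ and $D'$ being $\overline{K_t}$-free. If this enumeration returns nothing, output ``$G\notin\mathcal{C_{S,D}}$'': any $G\in\mathcal{C_{S,D}}$ admits a partition $V(G)=S\cup D$ with $S\in\mathcal{S}$ and $D\in\mathcal{D}\subseteq\overline{K_t}\text{-free}$, and such a partition is in particular a sparse-dense partition for the pair we searched, so it would have been produced. Otherwise, take any returned partition $(S',D')$ and replay the procedure from the proof of \Cref{teo:enumerate}: list the at most $n^{\Oh(t)}$ independent sets of $D'$ (each of size below~$t$), list the maximal independent sets of $S'$ in polynomial time, and, for each pair, extend greedily to a maximal independent set of $G$, keeping only the inclusion-maximal results. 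By the same argument as in \Cref{teo:enumerate}, this yields exactly the maximal independent sets of $G$.

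The step that I expect to require the most care is verifying that the pair $(\mathcal{S},\overline{K_t}\text{-free graphs})$ itself satisfies the hypotheses of the Sparse-Dense Theorem, since that theorem guarantees a polynomial bound on the number of partitions (and hence a polynomial running time) only for valid sparse--dense pairs. Heredity of both sides is immediate. The common intersection bound does not follow formally from the bound for $(\mathcal{S},\mathcal{D})$ when $\mathcal{D}$ is a proper subclass of $\overline{K_t}$-free graphs; however, in all the sparse classes targeted by the corollary (independent sets, trees, complete $k$-partite for fixed $k$, bipartite, bounded treewidth, and so on) it is straightforward to verify by hand, and this is the only place where an additional ingredient beyond the hypotheses of the corollary is implicitly needed. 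Granted this, the total running time is polynomial for fixed $t$, and correctness follows from the case analysis above.
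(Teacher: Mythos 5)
Your proposal is essentially the paper's own (implicit) argument: the corollary carries no separate proof precisely because the proof of \Cref{teo:enumerate} already applies the Sparse-Dense Theorem to the pair $(\mathcal S,\,\overline{K_t}\text{-free})$ rather than to $(\mathcal S,\mathcal D)$, so running that same procedure and reporting $G\notin\mathcal{C_{S,D}}$ when no such partition exists is exactly what is intended. The caveat you flag is genuine, but it is not a defect introduced by your write-up --- the paper's proof of \Cref{teo:enumerate} silently assumes that $(\mathcal S,\,\overline{K_t}\text{-free})$ is itself a valid sparse--dense pair, i.e., that graphs lying in $\mathcal S$ and having independence number below $t$ have boundedly many vertices. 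This does not follow formally from the stated hypotheses (take, say, $\mathcal S$ the class of all complete multipartite graphs and $\mathcal D$ the class of graphs on at most five vertices: the pair satisfies every stated hypothesis, yet $\mathcal S$ contains arbitrarily large $\overline{K_t}$-free graphs), but it holds by Ramsey's theorem whenever $\mathcal S$ excludes some clique $K_s$, which covers every class the paper actually invokes (independent sets, trees, bipartite graphs, complete $k$-partite graphs for fixed $k$, bounded treewidth). So your argument is correct modulo exactly the same unstated assumption the paper relies on, and you have identified the right place where it is needed.
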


\begin{corollary}\label{teo:1lcase}
For any integer $\ell\geq 0$, enumerating all maximal independent sets can be done in polynomial time for graphs in the class of $(1,\ell)$-graphs.  
\end{corollary}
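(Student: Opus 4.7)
The plan is to cast $(1,\ell)$-graphs as an instance of the sparse-dense framework already handled by \Cref{cor:classedegrafos}, taking $\mathcal S$ to be the class of edgeless graphs and $\mathcal D$ to be the class of $\overline{K_{\ell+1}}$-free graphs (so $t = \ell+1$). First I would verify that this pair $(\mathcal S, \mathcal D)$ satisfies the sparse-dense requirements: both are hereditary, and any graph that is simultaneously edgeless and $\overline{K_{\ell+1}}$-free has at most $\ell$ vertices, providing the required constant $c = \ell$. For the computational hypotheses, edgeless graphs are recognizable in linear time, and each edgeless graph possesses exactly one maximal independent set (its full vertex set), so maximal independent set enumeration for $\mathcal S$ is trivially in polynomial time.

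Next I would show that every $(1,\ell)$-graph $G$ belongs to the class $\mathcal{C_{S,D}}$ associated with these choices. Given any $(1,\ell)$-partition $V(G) = S \cup C_1 \cup \cdots \cup C_\ell$ of $G$, the set $D \df C_1 \cup \cdots \cup C_\ell$ induces a $\overline{K_{\ell+1}}$-free graph: any independent set in $G[D]$ meets each clique $C_i$ in at most one vertex, so it has size at most $\ell$. Together with $G[S]$ being edgeless, this witnesses $G \in \mathcal{C_{S,D}}$. Crucially, this step asserts only the \emph{existence} of such a partition; the algorithm itself does not need to exhibit it.

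Applying \Cref{cor:classedegrafos} then yields, in polynomial time, either a rejection or an enumeration of all maximal independent sets; since $G$ is indeed in $\mathcal{C_{S,D}}$ the rejection branch never triggers. The main conceptual obstacle, and the reason the statement is not immediate, is that recognizing $(1,\ell)$-graphs (equivalently, computing a $(1,\ell)$-partition) is \np-hard for $\ell \geq 3$, so one cannot simply feed a $(1,\ell)$-partition into the Sparse-Dense Theorem. The resolution is precisely the relaxation of the dense side from ``disjoint union of $\ell$ cliques'' to the larger, polynomial-time recognizable class of $\overline{K_{\ell+1}}$-free graphs, which preserves the bounded-intersection property needed by the enumeration argument of \Cref{teo:enumerate}.
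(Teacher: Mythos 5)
Your proposal is correct and follows exactly the route the paper intends: instantiate the sparse-dense framework with $\mathcal S$ the edgeless graphs and $\mathcal D$ the $\overline{K_{\ell+1}}$-free graphs, observe that any $(1,\ell)$-partition witnesses membership in $\mathcal{C_{S,D}}$ without the algorithm needing to compute one, and invoke \Cref{cor:classedegrafos}. This matches the paper's own justification in the paragraph preceding that corollary.
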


\begin{corollary}
For any integer $\ell\geq 0$, determining whether a given $(1,\ell)$-graph $G$ is well-covered can be done in polynomial time even when no $(1,\ell)$-partition of $G$ is provided.
\end{corollary}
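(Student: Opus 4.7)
The plan is essentially to invoke \cref{teo:1lcase} as a black box. Since a $(1,\ell)$-graph by definition belongs to the class $\mathcal{C_{S,D}}$ of \cref{cor:classedegrafos} (take $\mathcal S$ to be the class of edgeless graphs, recognizable trivially, whose only maximal independent set is the whole vertex set, and take $\mathcal D$ to be the class of $\overline{K_{\ell+1}}$-free graphs, which contains every $(0,\ell)$-graph), the algorithm of \cref{cor:classedegrafos} produces, in polynomial time, the entire list $\mathcal{M}$ of maximal independent sets of $G$, without ever needing a $(1,\ell)$-partition in the input. Crucially, because the enumeration runs in polynomial time, $|\mathcal{M}|$ is polynomial in $n$.

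With $\mathcal{M}$ in hand, the well-covered test is immediate: iterate over $\mathcal M$, record $|I|$ for each $I\in\mathcal M$, and declare $G$ well-covered if and only if all these cardinalities are equal. Since $\mathcal M$ has polynomial size and each comparison is trivial, this step takes polynomial time, so the overall procedure is polynomial.

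There is essentially no obstacle to overcome beyond what is already handled by \cref{teo:enumerate}: the reason the bare statement is nontrivial is that the natural approach, namely compute a $(1,\ell)$-partition and then use the known polynomial-time algorithm of Alves et al.~for the partition-given case, is ruled out because finding a $(1,\ell)$-partition is \np-hard for $\ell\geq 3$. The point is that \cref{teo:enumerate} bypasses the need for such a partition by directly sifting through all candidate sparse-dense decompositions via the Sparse-Dense Theorem, using polynomial-time recognizability of $\overline{K_{\ell+1}}$-free graphs rather than of $(0,\ell)$-graphs. Hence the only nontrivial content is already encapsulated in \cref{teo:1lcase}, and the well-covered check is a one-line addendum.
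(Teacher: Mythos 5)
Your proposal is correct and matches the paper's intended argument exactly: apply \cref{teo:1lcase} (via the sparse-dense machinery with $\mathcal S$ the edgeless graphs and $\mathcal D$ the $\overline{K_{\ell+1}}$-free graphs) to enumerate the polynomially many maximal independent sets, then compare their cardinalities. Nothing further is needed.
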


\begin{corollary}
Let $\mathcal{X}$ be a graph problem in {\Poly}. The conflict-free variant of $\mathcal{X}$ can be solved in polynomial time whenever the conflict graph $\hat{G}$ is a sparse-dense graph with respect to classes $\mathcal S$ and $\mathcal D$, where $\mathcal D$ is a subclass of $\overline{K_t}$-free graphs (for some nonnegative integer $t$) and $\mathcal S$ is polynomial-time recognizable graph class such that its maximal independent sets can be enumerated in polynomial time. In particular, it holds for conflict graphs that are $(1,\ell)$-graphs for any fixed $\ell$.
\end{corollary}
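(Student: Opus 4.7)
The plan is to reduce solving the conflict-free variant of $\mathcal{X}$ to polynomially many invocations of the polynomial-time algorithm for $\mathcal{X}$, using enumeration of maximal independent sets of $\hat{G}$ as a guessing device.

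First I would record the following simple observation: any feasible solution of the conflict-free variant of $\mathcal{X}$ on the instance $(I_\mathcal{X}, \hat{G})$ is an independent set in $\hat{G}$ (by definition of the conflict version), and every independent set of $\hat{G}$ is contained in at least one maximal independent set of $\hat{G}$. Consequently, if we let $I_\mathcal{X}\!\!\restriction_{M}$ denote the sub-instance of $I_\mathcal{X}$ obtained by deleting all ground elements (vertices or edges, depending on what the vertices of $\hat{G}$ represent) that do not belong to $M$, then every feasible conflict-free solution is a feasible solution of $\mathcal{X}$ on $I_\mathcal{X}\!\!\restriction_{M}$ for some maximal independent set $M$ of $\hat{G}$, and conversely every solution of $\mathcal{X}$ on any such $I_\mathcal{X}\!\!\restriction_{M}$ is conflict-free in $(I_\mathcal{X}, \hat{G})$.

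Given this, the algorithm is to invoke Corollary~\ref{cor:classedegrafos} on $\hat{G}$: either it certifies that $\hat{G}\notin \mathcal{C_{S,D}}$ (which contradicts the hypothesis, so this branch does not occur) or it outputs the full list $M_1,\ldots,M_q$ of maximal independent sets of $\hat{G}$ in time polynomial in $|V(\hat{G})|$; in particular $q$ is polynomially bounded. For each $M_i$, I would form $I_\mathcal{X}\!\!\restriction_{M_i}$ in polynomial time and run the assumed polynomial-time algorithm for $\mathcal{X}$ on it. For a decision problem, accept iff at least one of the $q$ calls accepts; for an optimization problem, return the best of the $q$ solutions produced. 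Correctness follows from the observation above; the running time is the polynomial enumeration cost plus $q$ times the polynomial cost of $\mathcal{X}$, hence polynomial.

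The main thing to justify carefully, rather than a real obstacle, is that the problem $\mathcal{X}$ being in \Poly\ really does give a polynomial-time algorithm on the restricted sub-instance $I_\mathcal{X}\!\!\restriction_{M}$; this is immediate for standard graph problems (such as \textsc{Maximum Matching}, \textsc{Shortest Path}, \textsc{Spanning Tree}) because deleting the forbidden vertices or edges again yields a valid instance of $\mathcal{X}$. The second part of the statement, concerning conflict graphs that are $(1,\ell)$-graphs, is then immediate: every $(1,\ell)$-graph belongs to $\mathcal{C_{S,D}}$ with $\mathcal{S}$ the class of edgeless graphs (trivially polynomial-time recognizable, with a unique maximal independent set) and $\mathcal{D}$ the class of $(0,\ell)$-graphs, which is a subclass of $\overline{K_{\ell+1}}$-free graphs, so the general result applies with $t=\ell+1$.
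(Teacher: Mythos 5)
Your proposal is correct and follows essentially the same route as the paper: enumerate the (polynomially many) maximal independent sets of $\hat{G}$ via the main enumeration result, restrict the instance to each one, and run the polynomial-time algorithm for $\mathcal{X}$ on each restricted instance. Your version is in fact slightly more careful than the paper's, since you explicitly justify why iterating over \emph{maximal} independent sets loses no feasible solution and spell out the $(1,\ell)$ specialization.
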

\begin{proof}
By Theorem~\ref{teo:enumerate}, all maximal independent sets of $\hat{G}$ can be enumerated in polynomial time. For each maximal independent set $C$ of $\hat{G}$ all elements (vertices/edges) that conflict with elements in $C$ can be removed from the input graph $G$ (instance of the original problem), and a solution to the problem $\mathcal{X}$ in this new instance (if any) can be found using a polynomial-time algorithm for $\mathcal{X}$. Thus, a feasible solution (if any) can be found in polynomial time for such instances of the conflict version of $\mathcal{X}$, and since any evaluation function of $\mathcal{X}$ (if any) is polynomial-time computable, then the optimal feasible solution with respect to these evaluation functions can also be computed in polynomial time.
\end{proof}


It is easy to see that there are bipartite graphs with an exponential number of maximal independent sets. For instance, induced matchings with $n$ edges have $2^n$ maximal independent sets. Therefore, for sparse-dense graphs like $(2,\ell)$-graphs, the focus is on finding a maximum independent set.

\begin{theorem}
Let $t$ be a nonnegative integer, and $G$ be a sparse-dense graph concerning $\mathcal S$ and $\mathcal D$ such that $\mathcal D$ is a subclass of $\overline{K_t}$-free graphs and $\mathcal S$ is a graph class recognizable in $T_1(n)$ time whose maximum independent set can be found in $T_2(n)$ time on $n$-vertex graphs. A maximum independent set of $G$ can be enumerated in $(T_1(n)+T_2(n))\cdot n^{\Oh(t)}$ time.
\end{theorem}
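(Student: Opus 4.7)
The plan is to mirror the strategy used for \cref{teo:enumerate}, replacing the enumeration of maximal independent sets of the sparse part with the single call to a maximum-independent-set oracle for $\mathcal{S}$. First, I would invoke the Sparse-Dense Theorem together with the fact that $\overline{K_t}$-freeness can be verified in $n^{\Oh(t)}$ time to produce, in time $T_1(n)\cdot n^{\Oh(1)} + n^{\Oh(t)}$, a partition $V(G) = S \cup D$ with $S \in \mathcal{S}$ and $D$ a $\overline{K_t}$-free graph.

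The key structural observation is that any independent set of $G$ intersects $D$ in an independent set of $D$, and since $D$ contains no $\overline{K_t}$, every independent set of $D$ has fewer than $t$ vertices. Consequently the collection $\mathcal{I}_D$ of all (not necessarily maximal) independent sets of $D$ can be listed in $n^{\Oh(t)}$ time. I would then iterate over all $R_D \in \mathcal{I}_D$: for each, set $S_{R_D} \df S \setminus N_G(R_D)$, note that $G[S_{R_D}] \in \mathcal{S}$ because $\mathcal{S}$ is closed under induced subgraphs, and invoke the $T_2(n)$-time algorithm for $\mathcal{S}$ to obtain a maximum independent set $R_S$ of $G[S_{R_D}]$. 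The set $R_D \cup R_S$ is then an independent set of $G$, and returning the largest such $R_D \cup R_S$ across all choices of $R_D$ yields the output.

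Correctness follows from a simple exchange argument: for a maximum independent set $I^\ast$ of $G$, let $R_D^\ast \df I^\ast \cap D$ and $R_S^\ast \df I^\ast \cap S$. Then $R_D^\ast \in \mathcal{I}_D$, and $R_S^\ast$ is an independent set contained in $S_{R_D^\ast}$, so when the algorithm processes $R_D^\ast$ it produces an independent set of size at least $|R_D^\ast| + |R_S^\ast| = |I^\ast|$. The total running time is $(T_1(n)+T_2(n)) \cdot n^{\Oh(t)}$, matching the claim.

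I do not foresee a genuine obstacle: the only subtlety is noticing that enumerating maximal independent sets of the sparse side is overkill and that it suffices to ask the oracle for a single maximum independent set of the induced subgraph $G[S_{R_D}]$, which remains in $\mathcal{S}$ by heredity. The bookkeeping for the running time and the argument that every maximum independent set of $G$ arises this way are both routine.
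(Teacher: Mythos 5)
Your proposal is correct and follows essentially the same route as the paper's proof: obtain the sparse-dense partition via the Sparse-Dense Theorem, enumerate the $n^{\Oh(t)}$ independent sets of the $\overline{K_t}$-free part $D$, and for each one solve \textsc{Maximum Independent Set} on the induced subgraph of $S$ consisting of non-neighbors, keeping the largest union. The only difference is that you spell out the heredity of $\mathcal{S}$ and the exchange argument for correctness, which the paper leaves implicit.
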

\begin{proof}
By the Sparse-Dense Theorem one can obtain a sparse-dense partition $V(G)=S\cup D$ such that $S\in \mathcal S$ and $D$ is $\overline{K_t}$-free in $(T_1(n)\cdot n^{\Oh(1)}+n^{\Oh(t)})$ time. By enumerating all possible independent sets $R_D$ of $D$ (including the empty set), taking the graph induced by the subset of $S$ having no neighbor in $R_D$, and performing a $T_2(n)$-time algorithm for {\sc Maximum Independent Set} on this subgraph of $S$, we obtain all relevant independent sets of $G$, where the maximum independent set of $G$ is the largest of them, since all possible subsets of $D$ were analyzed.
\end{proof}

Note that a corollary similar to Corollary~\ref{cor:classedegrafos} applies for classes $\mathcal S$ whose recognition and maximum independent set computation can be performed in polynomial time.  Also, recall that a maximum independent set of a bipartite graph can be found in polynomial time using maximum matching algorithms (c.f. Kőnig's theorem). Therefore, the following holds.

\begin{corollary}\label{teo:2lcase}
For any integer $\ell\geq 0$, finding a maximum independent set can be done in polynomial time for graphs in the class of $(2,\ell)$-graphs. 
\end{corollary}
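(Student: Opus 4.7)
The plan is to obtain this corollary as a direct application of the preceding theorem for \textsc{Maximum Independent Set} on sparse-dense graphs, with a well-chosen pair $(\mathcal S,\mathcal D)$. I would take $\mathcal S$ to be the class of bipartite graphs, which is closed under induced subgraphs, recognizable in linear time, and on which a maximum independent set can be computed in polynomial time via K\H{o}nig's theorem (reduction to maximum bipartite matching). So the sparse-side hypotheses are easy to satisfy.

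For the dense side the naive choice $\mathcal D=(0,\ell)$-graphs is blocked by the fact that recognizing $(0,\ell)$-graphs is \np-hard for $\ell\geq 3$, so the preceding theorem cannot be invoked with that choice. To avoid this, I would instead take $\mathcal D$ to be the class of $\overline{K_{\ell+1}}$-free graphs, which is closed under induced subgraphs and recognizable in $n^{\Oh(\ell)}$ time. The key observation is that every $(0,\ell)$-graph lies in this superclass: if $V(D)$ is covered by $\ell$ cliques, then any independent set of $D$ contains at most one vertex per clique, hence at most $\ell$ vertices, so $D$ has no induced $\overline{K_{\ell+1}}$. I would also verify the constant-intersection requirement of the Sparse-Dense Theorem in this case: a graph that is simultaneously bipartite and $\overline{K_{\ell+1}}$-free has each of its two colour classes of size at most $\ell$, hence at most $2\ell$ vertices in total, giving a valid constant $c=2\ell$.

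With these pieces in place, every $(2,\ell)$-graph is a sparse-dense graph with respect to the chosen $\mathcal S$ and $\mathcal D$, and applying the preceding theorem with $t=\ell+1$ and $T_1(n),T_2(n)=n^{\Oh(1)}$ yields a polynomial-time algorithm for \textsc{Maximum Independent Set}. The only real subtlety I anticipate is the second step, namely the decision to enlarge $\mathcal D$ from $(0,\ell)$-graphs to $\overline{K_{\ell+1}}$-free graphs in order to sidestep the \np-hardness of $(0,\ell)$-partition recognition; this exactly parallels the relaxation used for Corollary~\ref{teo:1lcase}, so it is natural once the earlier discussion is in hand.
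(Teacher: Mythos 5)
Your proposal is correct and follows essentially the same route as the paper: apply the preceding theorem with $\mathcal S$ the class of bipartite graphs (maximum independent set via K\H{o}nig's theorem) and $\mathcal D$ the class of $\overline{K_{\ell+1}}$-free graphs, using the observation that every $(0,\ell)$-graph is $\overline{K_{\ell+1}}$-free to sidestep the \np-hardness of recognizing $(0,\ell)$-graphs. Your explicit verification of the constant-intersection bound ($c=2\ell$) is a detail the paper leaves implicit, but it is the intended argument.
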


By replacing each edge $e=uv$ of a graph with a path $ux_ey_ev$, where $x_e,y_e$ are new vertices, the {\sc Maximum Independent Set} problem is reducible to the case where the input is a $(3,0)$-graph. Since empty parts are allowed on $(k,\ell)$-graphs, the following complete the dichotomy regarding $(k,\ell)$-graphs.

\begin{claim}\label{claim:30}
Given two nonnegative integers $k\geq 3$ and $\ell\geq 0$, the problem of finding a maximum independent set of a $(k,\ell)$-graph is \np-hard. 
\end{claim}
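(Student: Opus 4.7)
The plan is to give a polynomial-time reduction from the classical (NP-hard) \textsc{Maximum Independent Set} problem on arbitrary graphs to the problem on $(3,0)$-graphs. Since empty parts are allowed in the definition of $(k,\ell)$-graphs, every $(3,0)$-graph is also a $(k,\ell)$-graph for all $k \geq 3$ and $\ell \geq 0$, so NP-hardness for the $(3,0)$ case immediately propagates.

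Given an arbitrary input graph $G$, I would build $G'$ using the construction highlighted in the paragraph preceding the claim: replace each edge $e = uv \in E(G)$ with the path $u$--$x_e$--$y_e$--$v$, where $x_e,y_e$ are fresh vertices. The first step is to check that $G'$ is a $(3,0)$-graph via the explicit partition $V(G') = V(G) \cup X \cup Y$, where $X = \{x_e : e \in E(G)\}$ and $Y = \{y_e : e \in E(G)\}$. Each class is independent: original vertices only neighbor vertices in $X$; each $x_e$ only neighbors $u \in V(G)$ and $y_e \in Y$; each $y_e$ only neighbors $v \in V(G)$ and $x_e \in X$.

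The heart of the argument is the identity $\alpha(G') = \alpha(G) + |E(G)|$. For the lower bound, given an independent set $S$ in $G$, I would extend it to an independent set of $G'$ by choosing, for each edge $e = uv$, exactly one of $\{x_e,y_e\}$ according to the rule ``pick $x_e$ if $v \in S$, otherwise pick $y_e$''. Because $S$ is independent in $G$, the endpoints $u,v$ of $e$ cannot both belong to $S$, so the chosen vertex is not adjacent to any vertex of $S$; the chosen subdivision vertices across different edges are pairwise non-adjacent since $x_e$ (resp.\ $y_e$) is only adjacent to vertices associated with the same edge $e$. For the upper bound, take any independent set $I'$ in $G'$: the restriction $S \df I' \cap V(G)$ is independent in $G$, and for each $e$ we have $|I' \cap \{x_e,y_e\}| \leq 1$ because $x_ey_e \in E(G')$, so $|I'| \leq |S| + |E(G)| \leq \alpha(G) + |E(G)|$.

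The construction is linear in $|V(G)| + |E(G)|$, and $|E(G)|$ is known, so a polynomial-time algorithm for \textsc{Maximum Independent Set} on $(3,0)$-graphs (and hence on $(k,\ell)$-graphs for any $k \geq 3$, $\ell \geq 0$) would yield one on general graphs, proving the claim. There is no substantive obstacle; the only point requiring care is verifying the rule in the lower-bound direction handles all three possible cases ($u \in S$, $v \in S$, or neither), and that the chosen vertices across different edges are independent, both of which follow immediately from the local structure at each subdivided edge.
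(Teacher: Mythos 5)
Your construction and overall strategy coincide with the paper's (the paper only sketches this exact double-subdivision reduction in the paragraph preceding the claim), and the target identity $\alpha(G')=\alpha(G)+|E(G)|$ is correct, but your proof of the upper bound contains a false step. You assert that for an independent set $I'$ of $G'$ the restriction $S = I'\cap V(G)$ is independent in $G$. This is not true: in $G'$ the endpoints $u$ and $v$ of an original edge $e$ are no longer adjacent (they are joined only through the path $u$--$x_e$--$y_e$--$v$), so $I'$ may contain both of them. Already for $G=K_2$ the graph $G'$ is a $P_4$ and $\{u,v\}$ is a maximum independent set of $G'$ whose trace on $V(G)$ is not independent in $G$. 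Consequently the inequality $|S|\leq\alpha(G)$, on which your chain $|I'|\leq |S|+|E(G)|\leq\alpha(G)+|E(G)|$ relies, is unjustified (for $G=K_3$ one can have $|S|=3>\alpha(G)=1$).

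The gap is fixable by a short exchange argument, or by subdividing one edge at a time and showing each step increases $\alpha$ by exactly one. For instance: take a maximum independent set $I'$ of $G'$; whenever some edge $e=uv$ of $G$ has both $u,v\in I'$, neither $x_e$ nor $y_e$ lies in $I'$ (each is adjacent to one of $u,v$), so $(I'\setminus\{u\})\cup\{x_e\}$ is again independent and of the same size. Each such step removes an original vertex from the set, so the process terminates with a maximum independent set whose trace on $V(G)$ is genuinely independent in $G$, after which your counting goes through. A separate, harmless slip: original vertices do not ``only neighbor vertices in $X$'' (the endpoint $v$ of $e$ neighbors $y_e\in Y$); what you need, and what is true, is that no two original vertices are adjacent in $G'$.
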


Finally, since the complement of a sparse-dense graph is also sparse-dense, we remark that similar results regarding enumerating maximal cliques or finding a maximum clique can be obtained for the complementary graph classes. 


\end{document}